\documentclass[review]{elsarticle}

\usepackage{hyperref}

\usepackage{amsmath}
\usepackage{amssymb}
\usepackage{amsthm}
\usepackage{amsxtra}
\usepackage{amsfonts}
\usepackage{graphicx}
\usepackage{latexsym}
\usepackage{epsfig}
\usepackage{verbatim}
\usepackage{bm}
\usepackage{bbm}
\usepackage{color}
\usepackage{varwidth}
\usepackage{subfigure}
\usepackage{wrapfig}
\usepackage{trivfloat}
\usepackage{todonotes}
\usepackage{url}
\usepackage{fullpage}
\usepackage{dsfont}
\usepackage[margin=9pt,font=small,labelfont=bf,labelsep=colon]{caption}
\usepackage{natbib}

\usepackage{lipsum}
\usepackage[procnames]{listings}
\usepackage{array}
\usepackage{multicol}
\usepackage{float}
\usepackage{url}
\usepackage{multirow}
\usepackage{enumitem}
\usepackage{amsmath,lipsum}
\usepackage[doublespacing]{setspace}
\usepackage[subfigure]{tocloft}

\usepackage[subtle]{savetrees}

\newtheorem{theorem}{Theorem}
\newtheorem{lemma}{Lemma}
\newtheorem{proposition}{Proposition}

\newtheorem{definition}{Definition}
\newtheorem{corollary}{Corollary}[theorem]

\graphicspath{ {./figures/} }

%\modulolinenumbers[5]

\setlength{\bibsep}{0.0pt}

\journal{XX - under review}
%\journal{Information Processing Letters}

%%%%%%%%%%%%%%%%%%%%%%%
%% Elsevier bibliography styles
%%%%%%%%%%%%%%%%%%%%%%%
%% To change the style, put a % in front of the second line of the current style and
%% remove the % from the second line of the style you would like to use.
%%%%%%%%%%%%%%%%%%%%%%%

%% Numbered
%\bibliographystyle{model1-num-names}

%% Numbered without titles
%\bibliographystyle{model1a-num-names}

%% Harvard
%\bibliographystyle{model2-names.bst}\biboptions{authoryear}

%% Vancouver numbered
%\usepackage{numcompress}\bibliographystyle{model3-num-names}

%% Vancouver name/year
%\usepackage{numcompress}\bibliographystyle{model4-names}\biboptions{authoryear}

%% APA style
%\bibliographystyle{model5-names}\biboptions{authoryear}

%% AMA style
%\usepackage{numcompress}\bibliographystyle{model6-num-names}

%% `Elsevier LaTeX' style
\bibliographystyle{elsarticle-num}
\numberwithin{equation}{section}

%%%%%%%%%%%%%%%%%%%%%%%

\begin{document}

\begin{frontmatter}

%\title{Link Dimension for Exact Construction of Graphs}
\title{Link Dimension and Exact Construction of a Graph}

%% Group authors per affiliation:
\author{Gunjan S. Mahindre and Anura P. Jayasumana}%\footnote[1]{Corresponding Author: Anura Jayasumana, anura.jayasumana{@colostate.edu}}}
\address{Department of Electrical and Computer Engineering, Colorado State University, Fort Collins, Colorado 80525}
\ead{gunjan.mahindre,anura.jayasumana{@colostate.edu} }
%
%%% or include affiliations in footnotes:
%\author[mymainaddress,mysecondaryaddress]{Elsevier Inc}
%\ead[url]{www.elsevier.com}
%
%\author[mysecondaryaddress]{Global Customer Service\corref{mycorrespondingauthor}}
\cortext[mycorrespondingauthor]{Corresponding Author: Anura Jayasumana, anura.jayasumana{@colostate.edu}}
%\ead{support@elsevier.com}
%
%\address[mymainaddress]{1600 John F Kennedy Boulevard, Philadelphia}
%\address[mysecondaryaddress]{360 Park Avenue South, New York}
\begin{abstract}
Minimum resolution set and  associated metric dimension  provide the basis for unique and systematic labeling of nodes of a graph using %their
distances to a set of landmarks. %However, the corresponding set of  distance vectors is  not unique to the original graph and does not facilitate its exact reconstruction. 
Such a distance vector set, however, may not be unique to the graph and does not allow for its exact construction. 
%Such a distance vector set, however,  may not facilitate the exact construction of the original graph.
The concept of construction set is presented, which facilitates the unique representation of nodes and the graph as well as its exact construction. %using distance vectors to landmarks. 
Link dimension %of a graph % unlike metric dimension, 
is the minimum number of landmarks in a construction set. Results presented include  necessary conditions for a set of landmarks to be a construction set,  bounds for link dimension, and guidelines for transforming a resolution set to a construction set.
%The concepts of construction set and link dimension of a graph are introduced. Link dimension % unlike metric dimension, 
%specifies the minimum number of landmarks that allows the exact construction of a graph using the hop distances from the landmarks to each node. We  present a number of properties including the necessary conditions for a resolution set to be a valid construction set and bounds for link dimension. 
\end{abstract}

\begin{keyword}
%\texttt{metric dimension, graph reconstruction, resolution set, links, landmarks, anchors, bounds}
\texttt{metric dimension, graph construction, resolution set, network coordinates,\\ virtual coordinates}
% \MSC[2018] 00-01\sep  99-00
\end{keyword}

\end{frontmatter}

%\linenumbers

\section{Introduction}
\label{interoduction}
\setlength{\belowdisplayskip}{0pt} \setlength{\belowdisplayshortskip}{0pt}
\setlength{\abovedisplayskip}{0pt} \setlength{\abovedisplayshortskip}{0pt}
Consider a simple  undirected connected graph  $G$, defined by
$G=\left\{\mathcal{V}, \mathcal{E}\right\}$, where $\mathcal{V}$ is the set of nodes of cardinality $ N $ and $\mathcal{E}$ is the set of edges (links). 
$ G $ may be represented in terms of its adjacency matrix $ \textbf{A} $, % the $ {ij}^{th} $ element of which is given by $ a_{ij} $, 
where
%\begin{equation}
$\textbf{A} = \left[a_{ij}| a_{ij}=1 \;\text{ if }\; (i,j) \in \mathcal{E},
\;\text{ 0 otherwise}\; \right].
\label{A definition}
$%\end{equation}
Nodes $ i $ and $ j $ are said to be adjacent if $ a_{ij} = 1 $. Alternatively, $ G $ can be represented by its distance matrix $\textbf{H}.
%Nodes $ i $ and $ j $ are said to be adjacent if there is an edge between them. $ G $ can be represented by its distance matrix $\textbf{H}
\in \mathbb{Z}^{N \times N}$, where the $ {ij}^{th} $ element is given by $ h_{ij} $, such that
\begin{equation}
	\begin{aligned}
	\textbf{H} = \left[h_{ij}| h_{ij}=\text{the number of links in the shortest path (path length) } \text{from node} \ i \ \text{to node} \ j \right].
%		\textbf{H} = \left[h_{ij}| h_{ij}=\text{the length of the shortest path } \text{from node} \ i \ \text{to node} \ j \right].
		%h_{ij} = & \ \text{the length of the shortest path } \text{from node} \ i \ \text{to node} \ j .
	\end{aligned}
	\label{H definition}
\end{equation}
\noindent
%If we assume that a node in a graph can sense distances from a set of landmark nodes, $L$, with cardinality $|L|$, then a complete set of distances to each node from all the landmarks is called a \textit{distance vector matrix} $ \textbf{P} $ \cite{1}, of size $(N \times |L|)$ which is a subset of the complete distance matrix  $ \textbf{H} $, of size $(N \times N)$. Note that in networking context, graphs are termed as networks, landmarks as anchors, and distance vectors as virtual coordinates \cite{2}. 
% Let's denote the $ i^{th} $ row of $ \textbf{H} $ by a distance vector $ \textbf{H}(i)=\langle h_{i1},h_{i2},..,h_{iN}\rangle $ where $ i \in \mathcal{V} $.
% and $ h_{ij} $ represents shortest hop distance from node $ i $ to node $ j $.
%then $ \textbf{P}_ $ represents the distance vector of node $ i $ to all other $ i \in \textit{V} $. 
Let $ \mathcal{M}=\{A_1,...A_m\}$ be a subset of nodes of $ \mathcal{V}, $ with cardinality $m = |\mathcal{M}|$, designated as landmarks. The set of distances to each landmark from all $ i \in \mathcal{V} $ forms % is called 
a \textit{distance vector matrix} $ \textbf{P}_\mathcal{M} $ \cite{1} of size $(N \times m)$, where the $ i^{th} $ row of $ \textbf{P}_\mathcal{M} $ is  the distance vector (DV)
%$ \textbf{P}_\mathcal{M}(i)=\langle h_{iA_1},h_{iA_2},..,h_{iA_m}\rangle $; $ A_k \in \mathcal{M} $ and $ k \in \{1,2,..,m\} $. 
\begin{equation}
\textbf{P}_\mathcal{M}(i)=\langle h_{iA_1},h_{iA_2},..,h_{iA_m}\rangle ;\ \   i=\{1,2,..,N\}.  
\end{equation} 
Note that $h_{iA_k} = 0$ for $i = A_k$ and $h_{iA_k} >0$ for $i \neq A_k$. $ \textbf{P}_\mathcal{M} $ consists of a subset of columns of $\textbf{H}$.
% Without loss of generality, we assume these landmarks to be nodes $ 1,2,..,m $ and thus, $ \textbf{P}_\mathcal{M} $ is the $(N \times m)$ matrix corresponding to first $ m $ columns of $ \textbf{P} $.

%which is a subset of the complete distance matrix  $ \textbf{H} $, of size $(N \times N)$.

Distance vector based methods are attractive for many communication and social networking operations and applications. In the context of networking, graphs are termed as networks, landmarks as anchors, and distance vectors to  landmarks as \textit{virtual coordinates} \cite{2,10,Pendharkar2019}. Virtual coordinate based techniques are used to overcome, for example, uncertainties of physical distance measurements required for Cartesian coordinates caused by fading or interference of radio signals in wireless sensor networks \cite{2} and inaccessibility of certain  nodes in social networks \cite{icc}.  Conceptually, the adjacency matrix (Eq. \ref{A definition}) and the distance matrix (Eq.\ref{H definition}) are equivalent in representing a graph as one can be derived from the other. However, as  explained in \cite{Beerliova2006} using examples from communication
networks, it is often realistic to obtain the distances between nodes (i.e., $h_{ij}$'s) 
in many communication networks, while it is difficult or impossible to obtain
information about the presence  or absence of specific edges (i.e., $a_{ij}$'s) that are far away
from the query node.   Landmarks are typically chosen randomly or based on heuristics \cite{Dhanapala2011h}, only to ensure the uniqueness of coordinates of nodes without regard to their suitability of the DVs for reconstruction of the graph or capture the topology. Topology or layout information is recovered from  these distance vectors using approaches such as low-rank matrix completion \cite{JayArx2018} or by exploiting the statistical characteristics observed in the class of networks \cite{ding_paper}. There is no formal basis to relate the landmark selection and the corresponding  distance vectors to the ability to recover the network topology, a problem that we address in this paper.  
%As the selection of optimal set of landmarks is computation intensive \cite{1}, it is customary to over-anchor the network (i.e., select a significant fraction of nodes as landmarks) to ensure uniqueness virtual coordinates \cite{icc}.   

%While there are approaches for representing graphs in the form of unique node coordinates 
%%which are unique for each node in $ G $, 
%\cite{6}, they do not guarantee the
%%complete and unique 
%recovery of exact $ G $ from its representation. 
%By the definition of \textit{metric dimension}, 
With a sufficient number of landmarks, the rows of $ \textbf{P}_\mathcal{M} $ can uniquely label each node in $G$. A set of such landmarks is called a \textit{resolution set}, $ \mathcal{R} $,  and the minimum possible cardinality of such a set is called the \textit{metric dimension}, $\beta(G)$, of $G$. Let $ \tilde{\mathcal{R}} $ be  such a minimum resolution set; thus $\beta(G)=|\tilde{\mathcal{R}}|$. The concept of metric dimension was presented in \cite{8,3} and \cite{1,4,5,6,7} among others extended the results for different families of graphs. 
%While these approaches represent graphs in the form of unique node coordinates 
%%which are unique for each node in $ G $,  \cite{6}, 
%they do not guarantee the
%%complete and unique 
%recovery of exact $ G $ from the distance vector representation.
Although these approaches allow the representation of each node of G with unique virtual  coordinates,  in general there are multiple graphs that satisfy the same set of coordinates, and therefore the construction of the original graph from the distance vectors is not possible. 

We address the construction of a graph from distance vectors to a small set of landmarks,   an especially  important concept for large-scale networks as it dramatically reduces the complexity of network measurement, topology extraction and other network analytics \cite{icc} \cite{ding_paper}.  After identifying the conditions that make the minimum resolution set not sufficient for graph reconstruction (Section 2),  we present the concepts of link dimension and construction set for exact construction of the original graph using a set of minimum length distance vectors (Section 3). The DVs of a construction set provide a complete and exact representation of a graph in addition to assigning unique coordinates to its nodes. Several related  properties and bounds are derived (Section 4). 

\section{Metric dimension and graph reconstruction}
\label{metric dimension and graph reconstruction}
\setlength{\belowdisplayskip}{0pt} \setlength{\belowdisplayshortskip}{0pt}
\setlength{\abovedisplayskip}{0pt} \setlength{\abovedisplayshortskip}{0pt}
%Set of distance vectors, $ \textbf{P}_{\tilde{\mathcal{R}}} $, corresponding to the minimum resolution set $ \tilde{\mathcal{R}} $ of a graph uniquely labels each node.
Let $ \textbf{P}_{\tilde{\mathcal{R}}}  $ be the $ (N \times \beta(G)) $ matrix of unique distance vectors for $ \tilde{\mathcal{R}}. $  %, where $ m=|\tilde{\mathcal{R}}| $. 
Though the distances from nodes in $ \tilde{\mathcal{R}} $ to other nodes allow unique representation of each node, we show that $ \textbf{P}_{\tilde{\mathcal{R}}} $ does not guarantee the ability to reconstruct the original graph which a complete $ \textbf{H} $ or $ \textbf{A} $ does.

\begin{theorem}
	The set of unique distance vectors, $ \textbf{P}_{\tilde{\mathcal{R}}} $, corresponding to $ \tilde{\mathcal{R}} $ of graph $ G $, does not guarantee a one-to-one relationship with $ G $ and thus the exact reconstruction of $ G $. 
\end{theorem}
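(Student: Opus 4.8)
The plan is to prove the statement by exhibiting a counterexample: a single distance-vector matrix $\textbf{P}_{\tilde{\mathcal{R}}}$ that arises from two non-isomorphic graphs, or (more simply) from two distinct labeled graphs on the same vertex set that share the same minimum resolution set and the same distances to it. Since the claim is a non-existence/impossibility assertion ("does not guarantee"), one well-chosen example suffices, and this is almost certainly the cleanest route rather than attempting a structural characterization.

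Concretely, I would first fix a small graph $G$ and identify a minimum resolution set $\tilde{\mathcal{R}}$ with $\beta(G)$ landmarks, then compute $\textbf{P}_{\tilde{\mathcal{R}}}$. The key idea is that the distance vectors constrain, for each non-landmark node, only its distances to the landmarks, but they do not pin down adjacencies among non-landmark nodes when such adjacencies are "invisible" to every landmark --- i.e., adding or removing an edge $(u,v)$ does not change $h_{uA_k}$ or $h_{vA_k}$ for any $A_k\in\tilde{\mathcal{R}}$ (this happens, for instance, when $u$ and $v$ are equidistant from every landmark and each already has a shortest path to every landmark not using the other). Then I would build $G'$ from $G$ by toggling exactly one such edge, verify that $G'$ is still connected, verify that all shortest-path distances to the landmarks are unchanged (so $\textbf{P}_{\tilde{\mathcal{R}}}^{G'}=\textbf{P}_{\tilde{\mathcal{R}}}^{G}$ and $\tilde{\mathcal{R}}$ still resolves $G'$), and finally verify $G'\neq G$ (indeed non-isomorphic, by an invariant such as edge count or degree sequence). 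This yields two graphs with the identical unique-DV matrix, breaking the one-to-one correspondence.

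The main obstacle is choosing the example so that all three requirements hold simultaneously: the edge toggle must (i) preserve connectivity, (ii) leave every landmark-distance unchanged, and (iii) produce a genuinely different graph. Requirement (ii) is the delicate one --- toggling an edge generically perturbs many distances --- so the construction should place the landmarks "far enough" or arrange symmetry so the toggled edge never lies on (nor creates a shortcut for) any shortest path to a landmark. A convenient device is a pair of twin-like non-landmark vertices $u,v$ that are distinguished from each other only through a third landmark, with the $u$--$v$ edge redundant for all landmark shortest paths; then adding that edge changes the graph but not $\textbf{P}_{\tilde{\mathcal{R}}}$.

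As a fallback if a clean hand-checkable example proves fiddly, I would instead argue by a counting/dimension mismatch: the number of distinct graphs on $N$ labeled vertices is $2^{\binom{N}{2}}$, whereas the distance vectors to $\beta(G)$ landmarks take values in a bounded range and the map $G\mapsto\textbf{P}_{\tilde{\mathcal{R}}}$ is far from injective once $\beta(G)$ is small relative to $N$ --- but this is less satisfying and less constructive, so I expect the explicit small counterexample (likely on the order of five to seven vertices) to be the form the proof actually takes.
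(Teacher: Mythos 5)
Your proposal is correct and matches the paper's proof exactly: the paper exhibits precisely such a counterexample, two five-node graphs with $\tilde{\mathcal{R}}=\{1,2\}$ and identical $\textbf{P}_{\tilde{\mathcal{R}}} = \{\langle0,1\rangle,\langle1,0\rangle,\langle2,1\rangle,\langle2,2\rangle,\langle1,2\rangle\}$, differing only by the toggled edge $(3,5)$, which is invisible to both landmarks. Your anticipated construction (a small graph, an edge between two non-landmark nodes equidistant from the landmarks whose toggle preserves all landmark distances) is exactly what the paper does.
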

\begin{proof}
	Consider the two graphs shown in Figures \ref{fig_1:figa} and \ref{fig_1:two figures}. Note that each graph is compatible with the unique distance vector set $  \textbf{P}_{\tilde{\mathcal{R}}} = \{\langle0,1\rangle,\langle1,0\rangle,\langle2,1\rangle,\langle2,2\rangle,\langle1,2\rangle\}  $ corresponding to the minimum resolution set of nodes $ \tilde{\mathcal{R}}=\{1,2\} $. 
%	Consider a circular graph $C_n$, with 5 nodes. As $\beta(C_n )=2$; $n \in \textit{odd integers}$ [1], the distance vectors are unique, and metric dimension is satisfied, yet, 
	The presence or absence of edge (5,3) does not change $ \textbf{P}_{\tilde{\mathcal{R}}} $ and conversely, 
%	given $ \textbf{P}_{\tilde{\mathcal{R}}} $
	the presence or absence  of the edges cannot be confirmed from the distance vectors in $ \textbf{P}_{\tilde{\mathcal{R}}} $. % Nodes 5 and 3 are single hop apart, however; both the graphs (with and without edge (5,3)) satisfy $ \textbf{P}_{\mathcal{R}} $.	
\end{proof}
Having a unique graph $ G $ corresponding to $ \textbf{P}_{\mathcal{R}} $ of a resolution set ${\mathcal{R}} $ is desirable
as $ \textbf{P}_{\mathcal{R}} $ then provides an alternative compact representation of $G$. However, there is no formal foundation for the number of distance measurements needed to construct a graph for problems such as topology extraction \cite{icc,ding_paper} . 
%Constructing networks from a small set of landmarks has received significant interest recently for large-scale networks as it dramatically reduces the complexity of network measurement, topology extraction and other network analytics \cite{icc} \cite{ding_paper}. 
The difficulty arises due to edges such as (5,3).
Next, we define terms \textit{invisibility} and \textit{ambiguity} of such edges from two perspectives, that of selecting landmarks (i.e., $ {\mathcal{R}}$) for a graph,  and of constructing $ G $ from $ \textbf{P}_{\mathcal{R}} $ respectively, and illustrate them in Figure 1.
%We define terms \textit{invisibility} and \textit{ambiguity} of such edges from the perspectives of generating $ \textbf{P}_{\mathcal{R}} $ from $ G $ and extracting $ G $ from $ \textbf{P}_{\mathcal{R}} $ respectively.
%keep the following in thesis
%For a graph $G = (\mathcal{V},\mathcal{E})$, let’s consider that $\textbf{A}$ being the original adjacency matrix for $G$, $\hat{\textbf{A}}$ is the predicted (reconstructed) adjacency matrix from distance vector matrix $ \textbf{P}_\mathcal{M} $ created by $\textbf{A}$ and $ \mathcal{M} $. Thus, if one $ \textbf{P}_\mathcal{M} $ generates multiple $ G $ then $ \beta(G) $ is not sufficient to represent $ G $ uniquely and completely. Subsequently, in any given graph $ G $, there may exist few edges $ (i,j) $, while $\textbf{A}[i,j]=1$ ; s.t. $ \textbf{P}_\mathcal{M} $ is not affected by substituting $\hat{\textbf{A}} [i,j]=1$ or 0. Thus, we define the terms \textit{ambiguity} and \textit{invisible edge} as follows:
%\noindent The given $ \textbf{P} $, with $|L| \leq \beta(G)$, does not guarantee sufficient information to complete $ \hat{A} $.
%\textcolor{red}{OR - provide certainty of presence or absence of every edge in $ G $.} 
%\textcolor{red}{\noindent \textit{Ambiguity} in reconstructing $ \hat{A} $ from $ \textbf{P} $ causes multiple reconstruction results for the given distance vector matrix.} 

\begin{definition} \label{def I}
	Invisible edge: An edge (i,j) in a graph is said to be invisible with respect to a resolution set $\mathcal{R}$ when the removal or the addition of edge  $(i,j) $ from $ G $ does not affect any value in $ \textbf{P}_{\mathcal{R}} $.
\end{definition}
\begin{definition} \label{def A}
	Ambiguous edge: An edge is  ambiguous with respect to a distance vector set $ \textbf{P}_{\mathcal{R}}$,   if $ \exists $ two graphs $ G_1 $ and $ G_2 $ that satisfy $ \textbf{P}_{\mathcal{R}} $, where $ (i,j) $ is present in one and absent in the other.
%	entry $ \hat{A}[i,j] $ is termed as ambiguous with respect to $ \mathcal{R} $ only when for a given $ \textbf{P}_\mathcal{R} $, both the graphs corresponding to $ \hat{A}[i,j]=1 $ and 0, satisfy $ \textbf{P}_\mathcal{R} $.
\end{definition}
\begin{figure}[ht] \label{fig_1}
	\centering
%	\begin{subfigure} []
\subfigure[]{
		\centering
		\includegraphics[width=0.3\textwidth]{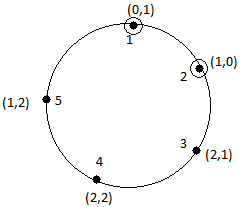}
		\label{fig_1:figa}
	}
%	\end{subfigure} 
%    \begin{subfigure} []
\subfigure[]{
	    \centering
		\includegraphics[width=0.3\textwidth]{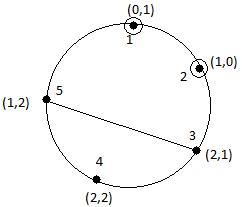}
		\label{fig_1:two figures}
	} 
%\end{subfigure} 
\\
\subfigure[]
	{\centering
		\includegraphics[width=0.32\textwidth]{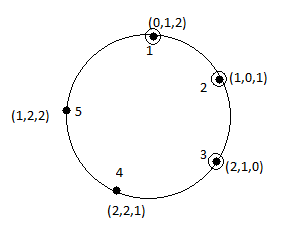}
		\label{fig_1:sdim}
	} 
\subfigure[]{
\centering
		\includegraphics[width=0.3\textwidth]{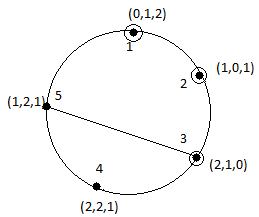}
		\label{fig_1:another graph}
	}
%\end{subfigure} 
	\caption{Ambiguous and invisible edges: Graphs in (a) and (b) both have ${\tilde{\mathcal{R}}}=\{1,2\}$ and   $\textbf{P}_{\tilde{\mathcal{R}}} = \{\langle0,1\rangle,\langle1,0\rangle, \langle2,1\rangle, \\ \langle2,2\rangle,\langle1,2\rangle\}$. Thus edge $(3,5)$ is invisible w.r.t. ${\tilde{\mathcal{R}}}$, and  w.r.t. $  \textbf{P}_{\tilde{\mathcal{R}}}$  edge $(3,5)$ is ambiguous. Distance vectors of the two graphs are shown in (c) and (d) respectively, when node 3 is added as a landmark, i.e., $ \mathcal{C} = \{1,2,3\}$. Edge $(3,5)$ is no longer invisible  w.r.t. $ \mathcal{C}$, and given corresponding $\textbf{P}_{{\mathcal{C}}}$ each graph can be exactly constructed. 	
	}
\label{fig:fig_1}
\end{figure}
\begin{lemma}
	No edge connected to a landmark node is invisible (or ambiguous). 
	\label{lemma_anchor_edge}
	\end{lemma}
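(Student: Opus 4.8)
The plan is to argue directly from Definitions~\ref{def I} and~\ref{def A}, using the observation that the distance vector $\textbf{P}_{\mathcal{R}}(i)$ records the entry $h_{iA_k}$ for every landmark $A_k$, and in particular records $h_{iA_k}=1$ precisely when $i$ is adjacent to the landmark $A_k$ (and $h_{A_kA_k}=0$). So the presence or absence of any edge incident to a landmark is literally read off from the coordinates.

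First I would treat invisibility. Let $(i,j)$ be an edge with $j=A_k$ a landmark, and suppose for contradiction that $(i,j)$ is invisible with respect to $\mathcal{R}$. Removing the edge $(i,A_k)$ changes $h_{iA_k}$ from $1$ to some value strictly greater than $1$ (it cannot stay $1$, since a distance of $1$ means adjacency, and it cannot become $0$ since $i\neq A_k$; it is also still finite because $G$ remains connected only if some path survives — I should note that if removal disconnects $G$ the distance becomes undefined/infinite, which also counts as a change, so either way the entry changes). Hence the $k$-th coordinate of $\textbf{P}_{\mathcal{R}}(i)$ changes, contradicting invisibility. The same argument with ``removal'' replaced by ``addition'' handles an absent edge $(i,A_k)$: adding it forces $h_{iA_k}=1$, whereas before it was at least $2$.

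Next I would treat ambiguity, which needs slightly more care because it quantifies over two different graphs $G_1,G_2$ that both realize the same $\textbf{P}_{\mathcal{R}}$. Suppose $(i,A_k)$ is present in $G_1$ and absent in $G_2$, with both compatible with $\textbf{P}_{\mathcal{R}}$. In $G_1$ we have $h^{G_1}_{iA_k}=1$, so the $k$-th coordinate of the common vector $\textbf{P}_{\mathcal{R}}(i)$ equals $1$. In $G_2$ the edge is absent, so $h^{G_2}_{iA_k}\geq 2$ (or is infinite), so the $k$-th coordinate of $\textbf{P}_{\mathcal{R}}(i)$ computed in $G_2$ is not $1$. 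But this coordinate is the same number in both graphs by hypothesis — contradiction. Hence no edge incident to a landmark can be ambiguous.

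The only subtlety — and the step I expect to require the most care in the writeup — is the claim that $h_{iA_k}$ is \emph{exactly} $1$ iff $(i,A_k)$ is an edge, i.e.\ ruling out that the distance could remain $1$ after edge removal or be $1$ without the edge. This is immediate from the definition of $\textbf{H}$ in Eq.~\eqref{H definition}: a shortest path of length $1$ is by definition a single edge, so $h_{iA_k}=1\iff a_{iA_k}=1$. I would state this as the one-line lemma underpinning both parts, and then the two cases above are short. I should also briefly remark that the parenthetical ``(or ambiguous)'' is handled by the ambiguity paragraph, so the lemma as stated is fully covered.
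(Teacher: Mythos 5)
Your proposal is correct and rests on exactly the same key observation as the paper's proof, namely that $h_{iA_k}=1$ if and only if $(i,A_k)$ is an edge, so adjacency to a landmark is read directly off the $k$-th coordinate. Your writeup is more explicit than the paper's (which only spells out the invisibility case and leaves the ambiguity case implicit), but the argument is essentially identical.
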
 
\begin{proof}
	Landmark $k$ is the only node with the corresponding coordinate equal to zero, i.e., $h_{iA_k}=0 \  \textit{if and only if}\  i=A_k$. Furthermore, $h_{iA_k}=1$ if and only if there is an edge between $i$  and $A_k$. Thus  no edge connected to an anchor is invisible.   
\end{proof}
\noindent
%With respect to the difference between distance vectors for nodes $i,j$, let 
To characterize the difference between distance vectors $\textbf{P}_\mathcal{R}(i) \textit{ and } \textbf{P}_\mathcal{R}(j)$ for a pair of nodes $i,j$, we define
\begin{equation}
\begin{split}
\bigtriangleup_{ij} =  {Max \{ \bigtriangleup_{ij}^k \} }\textit{ } \forall \textit{ } k \textit{ } |\textit{ } A_k  \in \mathcal{R}, \textit{\ \ where,   } 
\\ 
\bigtriangleup_{ij}^k = | h_{iA_k} - h_{jA_k} |. 
\end{split} 
\label{deltaij}
\end{equation}
\begin{lemma}
%An edge % between nodes 
%	$(i,j)$  is invisible with respect to $ \mathcal{R} $ only if \[ {Max| h_{iA_k} - h_{jA_k}  |  \leq 1} ; \textit{ } \forall \textit{ } i,j \in \mathcal{V} \textit{ } and \textit{ } A_k  \in \mathcal{R}	\]
An edge % between nodes 
$(i,j),  \textit{ } i,j \in \mathcal{V},$ is invisible with respect to $ \mathcal{R} $ only if  $i   \notin \mathcal{R}, \textit{ } j   \notin \mathcal{R},$ and $\bigtriangleup_{ij} =1.$
% \[ {Max| h_{iA_k} - h_{jA_k}  |  \leq 1} ; \textit{ } \textit{ } A_k  \in \mathcal{R}.	\]
Conversely,  this is also a necessary condition for an edge $(i,j)$ to be ambiguous w.r.t. a given $\textbf{P}_{{\mathcal{R}}}$.
\end{lemma}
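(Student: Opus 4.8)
The plan is to prove the two implications as a single argument, since the paper itself notes that invisibility w.r.t. $\mathcal{R}$ and ambiguity w.r.t. $\textbf{P}_{\mathcal{R}}$ share this necessary condition. First I would dispose of the condition $i \notin \mathcal{R}$, $j \notin \mathcal{R}$: this is immediate from Lemma~\ref{lemma_anchor_edge}, since an edge with an endpoint in $\mathcal{R}$ is neither invisible nor ambiguous. So the real content is showing $\bigtriangleup_{ij} = 1$, i.e. for every landmark $A_k \in \mathcal{R}$ we have $|h_{iA_k} - h_{jA_k}| \le 1$, and at least one such difference equals $1$ — actually, since $(i,j)$ is an edge, $|h_{iA_k} - h_{jA_k}| \le 1$ always holds by the triangle inequality on shortest paths (adjacent nodes differ in distance to any third node by at most $1$). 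So the crux is ruling out $\bigtriangleup_{ij} = 0$.

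Next I would argue by contradiction: suppose $(i,j)$ is invisible w.r.t. $\mathcal{R}$ but $\bigtriangleup_{ij} = 0$, meaning $\textbf{P}_{\mathcal{R}}(i) = \textbf{P}_{\mathcal{R}}(j)$ — nodes $i$ and $j$ have identical distance vectors. But $\mathcal{R}$ is a resolution set, so by definition its distance vectors uniquely label every node; two distinct nodes cannot share a distance vector. This contradiction forces $\bigtriangleup_{ij} \ge 1$, and combined with the triangle-inequality upper bound we get $\bigtriangleup_{ij} = 1$. For the ambiguity direction, the same contradiction applies: if $(i,j)$ is ambiguous w.r.t. $\textbf{P}_{\mathcal{R}}$, then $\textbf{P}_{\mathcal{R}}$ is realized by some graph (in particular it is a valid resolution-set distance matrix of at least one graph $G_1$), so again two nodes sharing a row is impossible. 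Alternatively, and perhaps more cleanly for the ambiguity case, I would note that if $h_{iA_k} = h_{jA_k}$ for all $k$, then adding or deleting edge $(i,j)$ cannot be detected by any landmark — but one should check that this really does preserve realizability; the safe route is the resolution-set uniqueness argument, which does not require reasoning about what happens under edge modification.

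One subtlety worth making explicit: the lemma asserts invisibility \emph{only if} these conditions hold (necessity), so I only need to show that invisibility (resp. ambiguity) implies the three conditions; I am not claiming the converse (that any such edge is invisible). This keeps the proof short — there is no need to construct witness graphs. The main obstacle, such as it is, is being careful about the direction of the triangle-inequality bound: I should state clearly that for adjacent $i,j$ and any node $w$, $|h_{iw} - h_{jw}| \le 1$ because a shortest path from $i$ to $w$ extended (or shortened) by the edge $(i,j)$ gives a walk from $j$ to $w$ of length $h_{iw} \pm 1$, hence $h_{jw} \le h_{iw} + 1$ and symmetrically. Everything else is a direct appeal to the definition of resolution set and to Lemma~\ref{lemma_anchor_edge}.
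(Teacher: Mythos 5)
Your proposal is correct and follows essentially the same route as the paper's own proof: appeal to Lemma~\ref{lemma_anchor_edge} for the landmark condition, use the uniqueness of distance vectors under a resolution set to rule out $\bigtriangleup_{ij}=0$, and use the shortest-path/triangle-inequality argument (the paper states it contrapositively: $\bigtriangleup_{ij}^k>1$ forbids the edge) to bound $\bigtriangleup_{ij}\le 1$. Your added care about the direction of implication and about realizability in the ambiguity case is a minor refinement, not a different argument.
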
 
\begin{proof}%\vspace{-3mm}
%	Anchor $k$ is the only node with the corresponding coordinate equal to zero, i.e., $h_{iA_k}=0 \  \textit{iff}\  i=A_k$. Furthermore, $h_{iA_k}=1$ if and only if there is an edge between $i$  and $A_k$.
Nodes have unique distance vectors in case of a resolution set, and thus $ \bigtriangleup_{ij} \neq 0$ for $i \neq j$. 	From Lemma \ref{lemma_anchor_edge}, for an edge to be invisible,  $i,j   \notin \mathcal{R}$.  
		Consider edge $(i,j), \textit{ }i,  j   \notin \mathcal{R},$  and  a landmark $A_k$ such that  $ \bigtriangleup_{ij}^k = | h_{iA_k} - h_{jA_k} |>1 $. Without loss of generality, let $ h_{iA_k} \geq h_{jA_k} $. If there is an edge  $(i, j)$, then there is a shorter path from $A_k$ to $i$ consisting of the shortest path from $A_k$ to $j$ followed by the edge $(i,j)$ and thus, $ h_{iA_k} = h_{jA_k} + 1  $ which contradicts the assumption. Thus, given $ \bigtriangleup_{ij}^k > 1 $, there can be no edge, invisible or otherwise, between nodes $i$ and $j$. 
%	Consider edge $(i,j), \textit{ }i,  j   \notin \mathcal{R},$  and  an anchor $A_k$ such that $ \bigtriangleup_{ij}^k > 1 $,  where $ \bigtriangleup_{ij}^k = | h_{iA_k} - h_{jA_k} | $. Without loss of generality, let $ h_{iA_k} \geq h_{jA_k} $. If there is an edge  $(i, j)$, then there is a shorter path from $A_k$ to $i$ consisting of the shortest path from $A_k$ to $j$ followed by the edge $(i,j)$ and thus, $ h_{iA_k} = h_{jA_k} + 1  $ which contradicts the assumption. Thus, given $ \bigtriangleup_{ij}^k > 1 $, there can be no edge, invisible or otherwise, between nodes $i$ and $j$.
\end{proof}
\section{Construction set and link dimension}
\label{construction set and link dimension}
Next, we introduce the novel concepts of \textit{construction set} and \textit{link dimension}   to facilitate exact representation and reconstruction of a graph via a set of  distance vectors. 
\begin{definition}
	Link dimension:
	A “construction set”, $ \mathcal{C} $, is defined as a set of landmarks that allows exact and unambiguous construction of $G$ from the formed distance vectors, $\textbf{P}_\mathcal{C}$. Such a construction set with minimum cardinality is a minimum construction set, $\mathcal{\tilde{C}}$, and its cardinality is called the “link dimension,” $ \gamma(G) $. 
% keep in thesis----------	
% All the members of set $ \{\tilde{C}\} $ can be referred to as “constructors.”
% 	accurate determination of all the edges with no missing or extra edges as compared to the original $ G $ 
\end{definition}
%\begin{corollary}
%	Given a construction set $\mathcal{C}$, $\exists$ a unique graph $G$ associated with it.
%% 	Distance vector matrix corresponding to $ \gamma(G) $ gives a unique and complete representation of $ G $.
%\end{corollary}
%\begin{figure}[!htbp] 
%	\centering 
%	\includegraphics[width=0.36\textwidth]{invisible_edges_5}	
%	\caption{
%		%In this figure we show a 
%		No invisible edges even with sufficient distance vectors
%	}
%	\label{no invi}
%\end{figure}
% keep in thesis
% \begin{definition}
% 	Adjacent nodes:
% 	Two nodes $ i $ and $ j $ are said to be \textit{"adjacent"} to each other if $ A[i,j] = 1 $ i.e. there is an $ edge(i,j) $ connecting these nodes. Otherwise they are said to be \textit{"non-adjacent"}.
% \end{definition}
% The way in which a construction set,  $ \mathcal{C} $ differs from a resolution set $ \mathcal{R} $ is that $ \mathcal{R} $
%A resolution set $ \mathcal{R} $  guarantees only the uniqueness of distance vectors for each node, whereas, $ \mathcal{C} $ provides the extra information needed %, which may not be contained in $ \mathcal{R} $, 
% to determine the presence or absence of a link between any two nodes.
By definition, there exists a unique graph $G$ associated with the distance vectors $\textbf{P}_\mathcal{C}$ of a given construction set $\mathcal{C}$. 
For example,   $\tilde{ \mathcal{C}}=\{1,2,3\}$ is a minimum construction set for each of the  graphs shown in Figure \ref{fig_1:sdim} and \ref{fig_1:another graph}, and the corresponding $\textbf{P}_{\tilde{\mathcal{C}}}$s  uniquely identify the respective graphs. In contrast,  ${\tilde{\mathcal{R}}}=\{1,2\}$ for graphs in Figure \ref{fig_1:figa} and \ref{fig_1:two figures} result in the same $\textbf{P}_\mathcal{R}$.
There may however be multiple minimum construction sets associated with   $G$. 
% ${\tilde{\mathcal{R}}}=\{1,2\}$ for graphs in Figure \ref{figa} and \ref{two figures} result in the same $\textbf{P}_\mathcal{R}$, whereas 
% %$ \mathcal{R} $ may provide us with two distinct vectors such as $\langle1,2\rangle$ and $\langle2,1\rangle$ in Figure \ref{figa} and \ref{two figures}. % however, as both the hop differences are only a single hop apart, a link may exist between these two nodes, as shown in Figure \ref{two figures}. 
%   $\tilde{ \mathcal{C}}=\{1,2,3\}$ %$ \mathcal{C} $
% provides sufficient reference distances to distinguish between adjacent nodes and non-adjacent nodes for the two graphs as shown in Figure \ref{sdim} and \ref{another graph}. % It illustrates how adding additional reference point(s) can resolve the ambiguity of structures, shown in Figure \ref{two figures}. 
 %Thus, there can be multiple realizations for distance vectors of minimum resolution set, whereas a construction set uniquely defines a graph for its distance vectors. 
% Consequently, the distance vector set for a minimum construction set gives the minimum sampling of data required for complete and unique graph reconstruction.
%Construction set, $ \gamma(G) $, preserves the adjacency of nodes in a graph.

\begin{theorem}\label{md less that ld}
	%     A ??? sufficient???   necessary  condition for a set of landmarks $\mathcal{M}$ to be a construction set,\textit{ }$\mathcal{C}$, is a) $\mathcal{M}$ is a resolution set,\textit{ }$\mathcal{R}$, and b) for each unconnected node pair $i,j$ in $G$, $\exists$ at least one anchor, $A_k$, such that $ \bigtriangleup_{ij}^k > 1 $ ( i.e., $ \bigtriangleup_{ij} > 1 $).
	A necessary  condition for a set of landmarks $\mathcal{M}$ to be a construction set,\textit{ }$\mathcal{C}$, is a) $\mathcal{M}$ is a resolution set, and b) for each unconnected node pair $i,j$ in $G$, $\exists$ at least one landmark $A_k$, such that $ \bigtriangleup_{ij}^k > 1 $ ( i.e., $ \bigtriangleup_{ij} > 1 $).
\end{theorem}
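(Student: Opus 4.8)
The plan is to prove the two necessary conditions separately, each by a contrapositive/contradiction argument in which I exhibit two distinct graphs sharing the same distance-vector matrix $\textbf{P}_{\mathcal{M}}$, thereby contradicting the defining property of a construction set (that there be a \emph{unique} graph associated with $\textbf{P}_{\mathcal{C}}$).

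For condition (a): suppose $\mathcal{M}$ is \emph{not} a resolution set. Then by definition of resolution set there exist two distinct nodes $i \neq j$ with identical distance vectors, i.e. $\textbf{P}_{\mathcal{M}}(i) = \textbf{P}_{\mathcal{M}}(j)$, so $\bigtriangleup_{ij} = 0$. I would argue that $i$ and $j$ are interchangeable: relabelling/swapping the two nodes produces an automorphism-like situation in which the distance data seen by the landmarks is unchanged, but one can realize $\textbf{P}_{\mathcal{M}}$ by (at least) two labelings, or more directly, that since the edge $(i,j)$-status and the neighborhoods are not pinned down by the landmark distances, one gets ambiguity. The cleanest route is: because $\bigtriangleup_{ij}=0$, the pair $(i,j)$ fails the only-if condition of the preceding lemma in the trivial sense — actually I should instead directly note that with $\bigtriangleup_{ij}=0$ neither $h_{iA_k}=h_{jA_k}+1$ nor $h_{jA_k}=h_{iA_k}+1$ can ever hold, so no landmark can certify the presence of edge $(i,j)$; combined with the fact that $i,j\notin\mathcal{M}$ (a landmark has a unique zero coordinate, so it cannot share its DV with another node), the edge $(i,j)$ is ambiguous, hence $\textbf{P}_{\mathcal{M}}$ does not determine $G$. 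So $\mathcal{M}$ is not a construction set.

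For condition (b): assume $\mathcal{M}$ \emph{is} a resolution set but there is an unconnected pair $i,j$ (so $(i,j)\notin\mathcal{E}$) with $\bigtriangleup_{ij}\le 1$; since $i\ne j$ and $\mathcal{M}$ resolves them, $\bigtriangleup_{ij}\ne 0$, hence $\bigtriangleup_{ij}=1$, meaning for every landmark $A_k$, $|h_{iA_k}-h_{jA_k}|\le 1$. The claim is that adding the edge $(i,j)$ to $G$ yields a graph $G'$ with the \emph{same} $\textbf{P}_{\mathcal{M}}$. To see this I would show no shortest-path distance from any landmark $A_k$ to any node changes: the only distances that could shrink are those whose shortest path in $G'$ uses the new edge $(i,j)$; such a path from $A_k$ to a node $v$ would have length $h_{A_k i}+1+(\text{dist from }j\text{ to }v)$ (or the symmetric version through $j$ first), and since $|h_{A_k i}-h_{A_k j}|\le 1$ we get $h_{A_k i}+1 \ge h_{A_k j}$, so routing through the new edge never beats the old shortest path to $j$ (and symmetrically to $i$); a standard induction on distance then shows \emph{all} landmark-to-node distances are unchanged. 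Hence $G$ and $G'$ are distinct graphs (they differ on edge $(i,j)$) realizing the same $\textbf{P}_{\mathcal{M}}$, so $\mathcal{M}$ is not a construction set — contradiction. This establishes (b).

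I expect the main obstacle to be the induction in part (b) verifying that inserting the edge $(i,j)$ under the hypothesis $\bigtriangleup_{ij}\le 1$ leaves every landmark row of $\textbf{P}_{\mathcal{M}}$ unchanged — one must be careful that a shortcut could in principle chain through the new edge in either direction and that no \emph{other} distance (not just $h_{iA_k}, h_{jA_k}$) is reduced. The triangle-inequality bookkeeping is routine but needs to be spelled out cleanly: it suffices to observe $d_{G'}(A_k,v) \ge \min\{d_G(A_k,v),\, d_G(A_k,i)+1+d_G(j,v),\, d_G(A_k,j)+1+d_G(i,v)\}$ and that the latter two quantities are each $\ge d_G(A_k,v)$ given $|d_G(A_k,i)-d_G(A_k,j)|\le1$ together with the triangle inequality $d_G(A_k,v)\le d_G(A_k,j)+d_G(j,v)$ etc. Part (a) is comparatively immediate once one invokes Lemma~\ref{lemma_anchor_edge} to place $i,j$ outside $\mathcal{M}$.
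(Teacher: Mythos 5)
Your proposal is correct and follows essentially the same route as the paper: for (a), a non-resolution set forces $\bigtriangleup_{ij}=0$ for some pair, making the edge $(i,j)$ ambiguous; for (b), a nonadjacent pair with $\bigtriangleup_{ij}=1$ admits adding the edge $(i,j)$ without altering any distance vector, yielding a second graph consistent with $\textbf{P}_{\mathcal{M}}$. The only difference is that you spell out the triangle-inequality argument showing no landmark-to-node distance changes after inserting the edge, a step the paper simply asserts.
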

\begin{proof}
a)	When $ \mathcal{M} $ is not a resolution set, there are at least two nodes, $ i $ and $ j $, that have identical distance vectors, i.e., $ \bigtriangleup_{ij} =0 $, and the edge between the two nodes is ambiguous.
b)	When $ \bigtriangleup_{ij} = 1$ for nonadjacent nodes $i,j$, it is possible to add an edge $\langle i,j \rangle$ without changing the distance vectors resulting in a different graph where the two nodes are connected. To avoid such ambiguity, it is necessary to have $ \bigtriangleup_{ij} > 1 $.  % Furthermore, the distance vector elements are independent of the presence or absence of an edge between these two nodes. As there are two graphs satisfying the same distance vector set, $ \mathcal{M} $ is not a construction set. Thus, $ |\mathcal{M}| \neq \gamma(G) $.
\end{proof}
\begin{corollary} The link dimension $\gamma(G)$ and the metric dimension $\beta(G)$ of a  graph G are related by, 
\begin{equation} \label{link and metric relation}
\gamma(G) \geq \beta(G).        
\end{equation}
\end{corollary}
\begin{proof}
A minimum construction set $ \tilde{ \mathcal{C}} $ is a resolution set, while a minimum resolution set does not necessarily have a sufficient number of landmarks to ensure %$ \bigtriangleup_{ij} > 1 $.
exact reconstruction of $G$ (e.g., see Fig. \ref{fig_1:figa},\ref{fig_1:two figures}).
\end{proof}

%%Satisfying metric dimension gives us the minimum solution for eliminating $ (\bigtriangleup_{ij}=0) $.
%Therefore, starting from a graph $ G $ and its $ \textbf{P}_\mathcal{R} $, where $ |\mathcal{R}| \geq \beta(G) $, a construction set can be obtained as follows: 
%\begin{enumerate}
%\item Identify the pairs of non adjacent nodes $i,j \in \mathcal{V} $  for which  $ \bigtriangleup_{ij}=1.$ \vspace{-1.5mm}
%\item Select as a new landmark a node $ k \in \mathcal{V} $ s.t. it resolves the invisible edge between nodes $ i $ and $ j, $ i.e., makes $ \bigtriangleup_{ij} > 1  $. (A simple but not necessarily optimal  choice is to select either $i$ or $j$ as a landmark. A more efficient procedure may attempt to resolve multiple edges using one landmark). \vspace{-1.5mm}
%\item Repeat above steps  until no non-adjacent node pairs are left for which $ (\bigtriangleup_{ij} = 1) $.  %$ \textit{ } \forall \textit{ } a_{ij} = 0; \textit{ } \forall \textit{ } i,j \in \mathcal{V} $.  
%The resulting set of landmarks is a construction set for $ G$.
%\end{enumerate} Conversely, given $ \textbf{P}_\mathcal{C} $ for a construction set $ \mathcal{C}$, G can be constructed by assigning edges  $(i,j)$ for   all $i,j$ for which $ \bigtriangleup_{ij} =1$.
A resolution set is not a construction set when there are edges that are invisible to its landmarks.  An edge $(i,j) \in \mathcal{E}$ is invisible if its removal from $G$ does not change $ \textbf{P}_\mathcal{R}$, while an edge $(i,j) \notin \mathcal{E}$ is invisible if its addition does not change $ \textbf{P}_\mathcal{R}$. Therefore, given a resolution set $\mathcal{R}$ for a graph $G$, a construction set can be obtained as follows:
\begin{enumerate}
	\item Identify the set of edges $\mathcal{I}$ that satisfy the necessary conditions to be  invisible or ambiguous:
	\begin{equation}
	\mathcal{I}= 
	\left\{
	(i,j) \Biggl|
	\begin{split}
	(i,j) \in \mathcal{E} \textit{ and removal of } (i,j) \textit{ from } G \textit{ does not change the distance vectors of } i \textit{ and }j; \\
	\textit{Or\ } 
	(i,j) \notin \mathcal{E} \textit{ and } \bigtriangleup_{ij} = 1
	\textit{\hspace{7cm} \ \ \ \ \ \ \ \ \  }  
	\end{split}
	\right\}
	\end{equation}
%	  {$(i,j) \in \mathcal{E}$ such that its removal from $G$ does not change the distance vector values of nodes $i$ and $j$, and\\ 
%	  $(i,j) \notin \mathcal{E}$ such that $ \bigtriangleup_{ij} = 1.$ } \vspace{-1.5mm}
	\item Select additional landmarks to resolve the  edges in $\mathcal{I}$. (A simple but not necessarily optimal  choice for resolving edge $(i,j)$ is to select either $i$ or $j$ as a landmark. Resolving multiple edges using one landmark will produce a more compact solution). %\vspace{-1.5mm}
	%\item Repeat above steps  until no invisible edges are left.   
	The resulting set of landmarks is a construction set for $ G$.
\end{enumerate} 
Conversely, given $ \textbf{P}_\mathcal{C} $ for a construction set $ \mathcal{C}$, G can be constructed by assigning edges  $(i,j)$ if and only if $ \bigtriangleup_{ij} =1$.
%\section{Bounds for Link Dimension}
\section{Properties and Bounds}
\label{bounds for link dimension}
\setlength{\belowdisplayskip}{0pt} \setlength{\belowdisplayshortskip}{0pt}
\setlength{\abovedisplayskip}{0pt} \setlength{\abovedisplayshortskip}{0pt}
% describe what you do in this section
%Next we derive several   properties and bounds for $\gamma(G)$. % under different constraints.
Consider % $ m $ landmarks in 
a graph with diameter $d$, i.e. $ h_{ij} \leq d\textit{ } \forall\textit{ } i,j \textit{ } \in \textit{ } \mathcal{V}$. 
%Each distance vector has $ m $ elements representing distances from respective landmarks,  and 
No element of a DV for a non landmark node can hold values outside $1$ to $d$.
The $ m $ landmarks themselves have unique DVs with one of the elements equal to zero. Thus, $m$ landmarks yield at most $(d^m+m)$ unique coordinates.
%$(d^m+m)$ gives us the maximum number of unique distance vectors with $m$ landmarks. 
%As we need %at least $m$ nodes 
%to generate unique distance vectors for each of the  $N$ nodes,
% a bound for metric dimension is given by \cite{9}:
Therefore, 
metric dimension $(\beta(G)=m)$ is bounded by \cite{9}:
\begin{equation}
(m+d^m \geq N)
\label{metric dimension bound}
\end{equation}
%The metric dimension bound  in Eq. \ref{metric dimension bound}
Above bound deals with $N$, the number of nodes to be resolved. 
In contrast, a construction set has to resolve each of the possible links, i.e.,  
be able to distinguish among each of the $L$ edges present as well as  unambiguously exclude edges among non-adjacent nodes. %, i.e., a construction set has to  resolve each link in $G$.
%The metric dimension bound  in Eq. \ref{metric dimension bound} deals with $N$ because the landmarks help  resolve each node. 
%In contrast, a link dimension bound 
%needs to be able to distinguish among each of the $L$ links as well as exclude all the missing edges, i.e., a construction set has to  resolve each link in $G$.
To help obtain bounds for the  link dimension, we consider the relationship between DVs for an edge to be feasible.  
 \begin{definition}
	Feasible links: An edge is feasible between nodes $i$ and $j$ % only
	 if $\bigtriangleup_{ij}^k \rlap{\kern.45em$|$}> 1$  $\forall \textit{ } k$, i.e. $\bigtriangleup_{ij} \rlap{\kern.45em$|$}> 1.  $ % where, $\bigtriangleup_{ij} =  {Max \{ \bigtriangleup_{ij}^k \} },\textit{ } \forall \textit{ } k $.
	\label{adj}
\end{definition}
\noindent
Thus, conceptually, in addition to the bound in Eq. \ref{metric dimension bound}, $m = \gamma(G)$ must also satisfy
%Bound for link dimension can be conceptually given by:
\setlength{\belowdisplayskip}{0pt} \setlength{\belowdisplayshortskip}{0pt}
\setlength{\abovedisplayskip}{0pt} \setlength{\abovedisplayshortskip}{0pt}
\begin{equation}
\begin{split}
(\textit{Number of feasible links for a coordinate set with } (d^m+m) \textit{ unique vectors}) 
\\
\geq (\textit{Actual number of links in the graph}) \\
%(\textit{links in the original network with } N \textit{ nodes}) 
\end{split}
\label{link dimension bound 1}
\end{equation}                    
% \noindent A complete graph has all feasible links. 
%While $ L $: actual links in the given graph, note, that the metric dimension bound Eq. \ref{metric dimension bound} deals with $N$ because the landmarks help to resolve each node in $G$; whereas link dimension bounds 
%%Eq. \ref{link dimension bound 1} 
%will deal with $L$ as construction set resolves each link in $G$. Thus, conceptually,
%%Bound for link dimension can be conceptually given by:
%\setlength{\belowdisplayskip}{0pt} \setlength{\belowdisplayshortskip}{0pt}
%\setlength{\abovedisplayskip}{0pt} \setlength{\abovedisplayshortskip}{0pt}
%\begin{equation}
%(\textit{Feasible links in a coordinate set with } (d^m+m) \textit{ unique vectors}) \geq (\textit{Actual links in the original network}) 
%%(\textit{links in the original network with } N \textit{ nodes}) 
%\label{link dimension bound 1}
%\end{equation}                                                        
%Note, that metric dimension bound Eq. \ref{metric dimension bound} deals with $N$ because the landmarks help to resolve each node in $G$; whereas link dimension bound Eq. \ref{link dimension bound 1} deals with $L$ as construction set resolves each link in $G$.
%Let’s say,
%%$ L $: actual links in the given graph,
%$ m $: number of landmarks and
%$ N $: number of vertices in $ G $.

Next, we derive bounds for link dimension, $m = \gamma(G)$, under two different constraints:\\
\noindent \textbf{Case I}: The number of links (edges) $ L $  of $G$ is known\\
%, nodes $ N $ and $m$ anchors, each node will have a distance vector of length $m$.
%Given node $i$ with distance vector $\langle x_1,x_2,..,x_m\rangle$ and node $j$ with distance vector $\langle y_1,y_2,..,y_m\rangle$,  link $(i,j)$ is feasible only if $\bigtriangleup_{ij} = 1$. Thus, with respect to  edge $(i,j)$,  each $y_k$. $k = \{1,2..,m\}$  is limited to three  values from $\{x_k,x_k - 1, x_k + 1\}$. Hence, a given node can have $ (3^m - 1) $ possible combinations of adjacent distance vectors (neighbors/links).
%For a landmark node, however, one of the distance values is 0, and $x_i -1$ is not possible. 
Link $(i,j)$ is feasible only if $\bigtriangleup_{ij} = 1$. Thus, with respect to  edge $(i,j)$,  each $h_{jA_k}$, $k = \{1,2..,m\}$  is limited to three  values from $\{h_{iA_k},h_{iA_k} - 1, h_{iA_k} + 1\}$. Hence, a given node can have $ (3^m - 1) $ possible combinations of adjacent distance vectors corresponding to neighbors or links.
For a landmark node ($A_k$) however, the only possible distances for an adjacent node $i$ is ${h_{iA_k}}=1$.  
Thus, a  landmark can have only up to $ 3^{(m-1)} $ links. Thus, Eq. \ref{link dimension bound 1} can be restated as
% For two adjacent nodes $i$ and $j$, $ \bigtriangleup_{ij} \leq 1 $. Note, that the difference can be 0 between $(i,j)$ when distance vectors are not unique.
%Feasible neighbor links for a node with distance vector of size $m$ can be given as 〖$(3^m-1)$ for a general node and $(3^{m-1})$ for a landmark node.
\begin{equation}
〖{(3〗^m-1)×(N-m)}+{(3^{m-1})(m)}  \geq (L\times2)              
\label{feasible links 2}
\end{equation} 
% \textbf{Constraint II}: Only “$\textit{N}$ is known for the graph”\\
%Note that the value of $\textit{L}$ lies within $\textit{N}-1$ and $(\textit{N}^2-\textit{N})/2$ 
%for a connected graph. %i.e. for a line and for a complete graph respectively. 

% \begin{equation}
% 〖{(3〗^m-1)×(N-m)}+{(3^{m-1})(m)}  \geq {N^2}-N            
% \label{tighter lower bound}
% \end{equation} 
%Let’s say,
%L: actual links in the given graph,
% \setlength{\belowdisplayskip}{0pt} \setlength{\belowdisplayshortskip}{0pt}
% \setlength{\abovedisplayskip}{0pt} \setlength{\abovedisplayshortskip}{0pt}
% For $ n_{dmin} $ as the minimum node degree of the graph,
% we get minimum links in $ G $ thus,
% \begin{equation}
% %(m+d^m ) \times {\frac{n_d}{2}} \geq L
% (m+d^m ) \times {\frac{n_{dmin}}{2}} \leq L
% \label{B1}
% \end{equation}
% $ LHS \leq L $ in Eq. \ref{link dimension bound 1}.
%\noindent $ n_{dmax} $ gives maximum links in $ G $ thus,
\noindent \textbf{Case II}: “Maximum node degree $(n_{dmax}) $ and number of links ($\textit{L}$) of $G$  are known\\
%for the graph”:
Eq.\ref{metric dimension bound} and  $n_{dmax} $ yields the bound in Eq. \ref{B2}, while the number of links possible for a landmark yields Eq. \ref{max links eqn}: 
%Using Eq.\ref{metric dimension bound} %and Eq. \ref{B2} for this case, we get the following bounds that have to be satisfied:
\setlength{\belowdisplayskip}{0pt} \setlength{\belowdisplayshortskip}{0pt}
\setlength{\abovedisplayskip}{0pt} \setlength{\abovedisplayshortskip}{0pt}
\begin{equation}
%(m+d^m ) \times {\frac{n_d}{2}} \geq L
(m+d^m ) \times {\frac{n_{dmax}}{2}} \geq L
\label{B2}
\end{equation}
\setlength{\belowdisplayskip}{-1pt} \setlength{\belowdisplayshortskip}{-1pt}
\setlength{\abovedisplayskip}{-2pt} \setlength{\abovedisplayshortskip}{-2pt}
% $ LHS \geq L $ in Eq. \ref{link dimension bound 1}.
% \noindent If $n_d = n_{dmax} = n_{dmin} $ for the given graph, 
% then, from Eq.s \ref{metric dimension bound} and \ref{link dimension bound 1},
%\setlength{\belowdisplayskip}{0pt} \setlength{\belowdisplayshortskip}{0pt}
%\setlength{\abovedisplayskip}{0pt} \setlength{\abovedisplayshortskip}{0pt}
%\noindent Thus, from Eq. \ref{B2}, we can say that
\begin{equation}
(3^{(m-1)}) \geq n_{dmax}
\label{max links eqn}
\end{equation}
\begin{proposition} %\begin{lemma}
    Linear graph  (path) is the only graph with $\gamma(G) = 1$.
\end{proposition}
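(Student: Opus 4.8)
The plan is to prove both directions: (i) every path $P_N$ has $\gamma(P_N)=1$, and (ii) if $\gamma(G)=1$ then $G$ is a path. For direction (i), take one endpoint of the path as the single landmark $A_1$. The distance vectors are then $\langle 0\rangle,\langle 1\rangle,\ldots,\langle N-1\rangle$, which are distinct, so $\{A_1\}$ is a resolution set. For any two nonadjacent nodes $i,j$ on the path their path-distances to the endpoint differ by at least $2$, so $\bigtriangleup_{ij}>1$, and by Theorem \ref{md less that ld} the necessary conditions are met; moreover the explicit reconstruction rule (assign edge $(i,j)$ iff $\bigtriangleup_{ij}=1$) recovers exactly the path, so $\{A_1\}$ is a construction set and $\gamma(P_N)\le 1$. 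Since $\gamma\ge\beta\ge 1$ always, $\gamma(P_N)=1$.

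For direction (ii), suppose $\mathcal{C}=\{A_1\}$ is a construction set for $G$, and let $d=h_{iA_1}$ be the distance from node $i$ to the unique landmark. Since $\mathcal{C}$ is a resolution set of size $1$, all these distances are distinct; as they lie in $\{0,1,\ldots,\mathrm{diam}(G)\}$ and there are $N$ of them, they must be exactly $0,1,\ldots,N-1$, one node $v_r$ at each distance $r$. Now apply the reconstruction rule: $(v_r,v_s)$ is an edge of $G$ iff $|r-s|=1$. Hence $G$ is forced to be precisely the path $v_0 v_1 \cdots v_{N-1}$. The key point here is that the reconstruction rule, which the construction set guarantees to yield the true $G$, leaves no freedom: with a single coordinate the edge set is completely determined by the multiset of distances, and that multiset is forced to be $\{0,1,\ldots,N-1\}$.

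The main obstacle is making the second direction airtight: one must argue that because $\mathcal{C}$ is genuinely a \emph{construction} set (not merely a resolution set), the actual graph $G$ coincides with the graph produced by the "edge iff $\bigtriangleup_{ij}=1$" rule — this is exactly the converse statement established in Section 3, so it may be cited. A secondary subtlety is the claim that the distances must be $0,1,\ldots,N-1$ without gaps: this uses that distinctness plus the range bound $0\le h_{iA_1}\le \mathrm{diam}(G)\le N-1$ forces surjectivity onto $\{0,\ldots,N-1\}$, and in particular that consecutive integers are all realized, which is also guaranteed by connectedness (a shortest path from $A_1$ to the farthest node passes through one node at each intermediate distance). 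Once these two points are in place the argument is immediate.
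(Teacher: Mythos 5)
Your proof is correct and takes essentially the same route as the paper's: a single landmark forces the distance vectors to be $\langle 0\rangle,\langle 1\rangle,\ldots,\langle N-1\rangle$, and the "edge iff $\bigtriangleup_{ij}=1$" reconstruction then forces the path with the landmark at one end. You additionally spell out the forward direction (that an endpoint of a path really is a construction set), which the paper leaves implicit but which follows from the same observations.
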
 %{}lemma}
\begin{proof}%\vspace{-3mm}
If %a construction set 
$ \mathcal{C} $ has only one landmark, and $G$ has $N$ nodes, the only possible set of DVs is $ \{ \langle 0 \rangle, \langle 1 \rangle,.. \langle N-1 \rangle \} $.  The landmark has a coordinate $0$, and it is connected to exactly one node (which has coordinate $1$),  which in turn is also connected to the node with coordinate $2$, etc.  This is the line graph, with the anchor placed at one end. 
%    For a line graph, the given one anchor at either end of the line, produce distance vectors $ \{ \langle 0 \rangle, \langle 1 \rangle,.. \langle N \rangle \} $. For such distance vectors, the only graph possible is a line graph with $N$ nodes.
    % For a line graph, the given one anchor at either end of the line, produce $\bigtriangleup_{ij} = 1 \iff a_{ij} = 1$ resolving all the edges.
\end{proof}
\begin{proposition} %\begin{proposition}
   For a complete graph with $N$ nodes, link dimension $\gamma(G) = N-1$.
\end{proposition}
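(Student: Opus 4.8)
The plan is to prove the claim in two directions: first that $\gamma(K_N) \le N-1$ by exhibiting an explicit construction set of size $N-1$, and then that $\gamma(K_N) \ge N-1$ by showing no set of $N-2$ (or fewer) landmarks can be a construction set. For the upper bound, I would take any $N-1$ of the $N$ nodes as landmarks, say $\mathcal{M} = \mathcal{V} \setminus \{v\}$. Since $K_N$ has diameter $1$, the distance vector of each landmark $A_k$ has a $0$ in coordinate $k$ and $1$'s elsewhere, and the single non-landmark $v$ has the all-$1$'s vector; these $N$ vectors are pairwise distinct, so $\mathcal{M}$ is a resolution set. To see it is a construction set, I would invoke Theorem~\ref{md less that ld}: there are no unconnected node pairs in $K_N$, so condition (b) is vacuous, and condition (a) holds; hence the necessary conditions are met, and more directly one can just note that every edge is incident to a landmark (every pair of nodes contains at least one node of $\mathcal{M}$), so by Lemma~\ref{lemma_anchor_edge} no edge is invisible or ambiguous, and the graph is exactly reconstructible from $\textbf{P}_\mathcal{M}$.

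For the lower bound, suppose $\mathcal{M}$ is a construction set with $|\mathcal{M}| = m \le N-2$. Then there are at least two non-landmark nodes $i, j \notin \mathcal{M}$. In $K_N$ every node is at distance $1$ from every other node except itself, so for any landmark $A_k$ we have $h_{iA_k} = h_{jA_k} = 1$ (since $i \ne A_k$ and $j \ne A_k$), giving $\bigtriangleup_{ij}^k = 0$ for all $k$, hence $\bigtriangleup_{ij} = 0$. By the argument in part (a) of Theorem~\ref{md less that ld}, nodes $i$ and $j$ then have identical distance vectors, so $\mathcal{M}$ is not even a resolution set, contradicting the necessary condition. Therefore $m \ge N-1$, and combined with the upper bound, $\gamma(K_N) = N-1$.

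I do not anticipate a serious obstacle here; the proof is essentially a direct application of the diameter-$1$ structure of $K_N$ together with Lemma~\ref{lemma_anchor_edge} and Theorem~\ref{md less that ld}. The one point requiring a little care is making the reconstruction claim in the upper bound airtight: I would state explicitly that from $\textbf{P}_\mathcal{M}$ one recovers that the unique non-landmark $v$ is adjacent to every landmark (each coordinate of its DV equals $1$) and every pair of landmarks is adjacent (all off-diagonal entries are $1$), so the only graph compatible with $\textbf{P}_\mathcal{M}$ is $K_N$ itself — this matches the reconstruction rule stated after the construction-set procedure, namely that one assigns edge $(i,j)$ iff $\bigtriangleup_{ij} = 1$, which here yields exactly the complete graph.
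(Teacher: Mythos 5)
Your proposal is correct and follows essentially the same two-part argument as the paper: the lower bound comes from observing that any two non-landmark nodes share the all-$1$'s distance vector (so fewer than $N-1$ landmarks cannot even resolve the nodes), and the upper bound from the fact that with $N-1$ landmarks every edge is incident to a landmark and hence is not invisible. Your version is simply more explicit about verifying the reconstruction rule, which the paper leaves implicit.
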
 
\begin{proof} %\vspace{-3mm}
	If two of the nodes are not in ${\mathcal{C}}$, both of them have identical distance vectors, consisting of all 1's. With $N-1$ nodes in ${\mathcal{C}}$, each edge is connected to a landmark node and is  thus resolved. 
%    Any two nodes in a complete graph have exact same path lengths to all anchors creating identical distance vectors. Thus, metric dimension for complete graphs is $N-1$ which resolves all the edges in $G$. \todo{proof incorrect}
\end{proof}
%%%%%%%%%%%%%%%%%%%%%%INCLUDE IF SPACE
\begin{proposition}
   For a finite cyclic graph $C_N$ of  $N > 6$ nodes, the link dimension $\gamma(G)$ is $ 2$.
\label{prop_cyclic}
\end{proposition}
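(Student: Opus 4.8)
The plan is to show $\gamma(C_N)=2$ for $N>6$ by establishing the upper bound $\gamma(C_N)\le 2$ via an explicit construction set of size two, and the lower bound $\gamma(C_N)\ge 2$ by ruling out $\gamma(C_N)=1$ using Proposition 1 (a path is the only graph with link dimension $1$, and $C_N$ is not a path). The substantive work is the upper bound. First I would fix the cycle $C_N$ with vertices labelled $0,1,\dots,N-1$ in cyclic order, and choose two landmarks, say $A_1 = 0$ and $A_2 = 1$, i.e. two \emph{adjacent} nodes on the cycle. Then I would write down the distance vector of an arbitrary node $v$: $h_{v,0}=\min(v,N-v)$ and $h_{v,1}=\min(|v-1|,N-|v-1|)$, and argue that the pair of landmarks is first a resolution set and second satisfies condition (b) of Theorem~\ref{md less that ld} — every non-adjacent pair has $\bigtriangleup_{ij}>1$.

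For the resolution-set part, I would note that for a cycle, two adjacent landmarks already resolve all nodes: knowing the distance to $0$ pins $v$ down to one of two antipodal-ish candidates $\{v, N-v\}$ (or a unique node when $v=N/2$), and the distance to the neighbour $1$ breaks that tie because the two candidates lie on opposite arcs relative to the edge $(0,1)$. This is where $N>6$ (really $N\ge 5$ for resolvability, but the larger bound is needed below) matters — for very small cycles the two candidate nodes can coincide in their second coordinate or create collisions with the landmarks themselves. For condition (b), I would take any two nodes $i,j$ that are \emph{not} adjacent on $C_N$ and show $\bigtriangleup_{ij}\ge 2$: if $i,j$ were within graph-distance $2$ but not adjacent, or at larger distance, I would compare their distances to $0$ and to $1$ and show at least one of these two differences is at least $2$. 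The key geometric fact is that adding any chord $(i,j)$ to the cycle strictly shortens some distance, and with two well-placed landmarks that shortening is detectable; conversely, the only pairs with $\bigtriangleup_{ij}=1$ are exactly the true edges of $C_N$. I would handle the parity cases ($N$ even vs. odd) and the nodes near the landmarks ($v\in\{N-1,0,1,2\}$) as small explicit sub-cases.

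The main obstacle I anticipate is the careful case analysis pinning down exactly which pairs achieve $\bigtriangleup_{ij}=1$, and verifying that this set is precisely the edge set of $C_N$ — no more, no less. One must rule out ``spurious'' pairs: nodes $i,j$ at cyclic distance $2$ that happen to have $|h_{i,0}-h_{j,0}|\le 1$ and $|h_{i,1}-h_{j,1}|\le 1$ simultaneously. A diametral or near-diametral pair is the dangerous case, since near the antipode of the landmarks the distance function to both $0$ and $1$ is ``flat'' (changes slowly), so two nodes a few steps apart there can have nearly equal coordinates. This is exactly why the hypothesis is $N>6$ rather than $N\ge 5$: for $N=5,6$ the flat region around the antipode is small enough that some non-adjacent pair sneaks in with $\bigtriangleup_{ij}=1$, forcing $\gamma>2$; for $N>6$ I would show the antipodal region, though flat in each coordinate separately, is never flat in \emph{both} coordinates at once for a non-adjacent pair, because the two landmarks sit one step apart and their ``flat spots'' (the points equidistant-ish to them) are offset by one. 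I would make this precise by splitting $v\mapsto(h_{v,0},h_{v,1})$ into the two arcs $0<v<N/2$ and $N/2<v<N$ and checking that on each arc the map is injective with consecutive images differing by exactly a unit step in one coordinate, while images from opposite arcs that are non-adjacent differ by at least $2$ in some coordinate.
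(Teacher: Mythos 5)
Your overall skeleton is right (exhibit a two-landmark construction set for the upper bound, invoke the path proposition for the lower bound), but the explicit construction at its core fails: two \emph{adjacent} landmarks such as $A_1=0$, $A_2=1$ do not form a construction set of $C_N$ for any $N\geq 7$. Concretely, in $C_8$ the non-adjacent pair $i=3$, $j=6$ has distance vectors $\langle 3,2\rangle$ and $\langle 2,3\rangle$, so $\bigtriangleup_{36}=1$ and condition (b) of Theorem~\ref{md less that ld} is violated; worse, adding the chord $(3,6)$ to $C_8$ changes no entry of $\textbf{P}_{\{0,1\}}$, so this DV set genuinely fails to determine the graph. The mechanism is the opposite of what you conjecture in your last paragraph: the reflection of the cycle through the midpoint of the edge $(0,1)$ fixes the landmark pair setwise and \emph{swaps} the two coordinates, so any node $v$ with $|h_{v,0}-h_{v,1}|=1$ (i.e., every node off the flat spot) and its mirror image $v'$ satisfy $\bigtriangleup_{vv'}\leq 1$; for $N\geq 7$ there is always such a mirror pair at cyclic distance at least $2$ (e.g., $(3,6)$ in $C_8$, $(3,5)$ in $C_7$). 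The one-step offset of the landmarks is precisely what forces the two coordinates of a mirror pair to differ by exactly $1$ in each slot, rather than preventing it.

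The fix --- and what the paper's proof actually asserts --- is to place the two landmarks at cyclic distance \emph{two}, i.e., $\mathcal{C}=\{1,3\}$ in the paper's labelling ($\{0,2\}$ in yours). Then the reflection that swaps the landmarks fixes the node between them, and the two coordinates of any mirror pair differ by an even amount ($0$ or $2$) in each slot, so no non-adjacent mirror pair can achieve $\bigtriangleup=1$; the remaining same-arc and near-antipodal pairs are handled by exactly the arc-by-arc monotonicity check you outline, and this is where $N>6$ enters (for $C_6$ the set $\{0,2\}$ fails because the antipodal non-adjacent pair $(1,4)$ has distance vectors $\langle 1,1\rangle$ and $\langle 2,2\rangle$, giving $\bigtriangleup=1$). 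Your lower-bound argument via the path proposition is fine. If you rewrite the upper bound with landmarks $\{0,2\}$ and carry out the case analysis you describe, the proof goes through; with landmarks $\{0,1\}$ it cannot.
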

\begin{proof}  % \vspace{-3mm}
	    %Thus, all nodes have unique distance vectors with $|\mathcal{M}| = 2$. 
	    Let the $N$ nodes be  labeled $[1,2,..,N]$. $\mathcal{C} = \{1,3\}$ is a minimum construction set for this graph. %As $\beta(G) = 2$ for $C_N$ ( $N > 2)$ \cite{9}, $\{1,3\}$ is a minimum construction set.
%    $\beta(G) = 2$ for $C_N$ and $N > 2$ \cite{9}. Thus, all nodes have unique distance vectors with $|\mathcal{M}| = 2$. Consider a cyclic graph with $N$ nodes with labels $[1,2,..,N]$ and $\mathcal{M} = \{1\}$. Thus, each non-landmark node creates an ambiguous edge with two to three other nodes. As we add another landmark at two hops from first landmark, s.t. $\mathcal{M} = \{1,3\}$, each node pair that had an ambiguous edge is now separated by at least two hops and at the maximum of $d$ hops, $d$ being the diameter of $C_N$.
    % Ambiguous edges, if any, would exist between nodes $i,j$ with distance vectors $\langle x,y \rangle$ and $\langle x-1,y+1 \rangle$ to give $\bigtriangleup_{ij} = 1$. When third landmark is added from the neighborhood of current landmarks, distance vectors of nodes $i,j$ would change to $\langle x,y,z \rangle$ and $\langle x-1,y+1,z+2 \rangle$. Thus, the third coordinate would create a difference of at least two units, consequently, resolving the ambiguous edges. 
\end{proof}

%\textcolor{red}{-should we make this as a separate section?? - Relationship between link dimension and metric dimension}\\
%Sebo and Tannier \cite{6} introduced the concept of {\it strong metric dimension}. 
%A concept closely related  to metric and link dimensions is  that of {\it strong metric dimension}, $sdim(G)$ \cite{6}.
Closely related   to metric dimensions is the {\it strong metric dimension}, $sdim(G)$, the minimum cardinality of a strong resolution set \cite{11}, \cite{6}.
Node $w$ strongly resolves two nodes $ u $ and $ v $ if $ u $ belongs to a shortest  $ v-w $ path or if $ v $ belongs to a shortest  $ u-w $ path. The node set $ \mathcal{S} $ %of $ G $
 is a strong resolving set %of $ G $ 
if every two distinct nodes of $ G $ are strongly resolved by some vertex in $ \mathcal{S} $. 
% If $\mathcal{S}$ is a minimum strong resolution set, then strong metric dimension, $sdim(G) = |\mathcal{S}|$ \cite{11}.
%$sdim(G)$, %of graph $ G $  is the minimum cardinality of the strong resolution set \cite{11}. 
If, as has been observed in \cite{6,12},  a strong resolution set can uniquely determine a graph, then  $ \mathcal{S} $ is a construction set. The question then is whether it is always a minimum construction set. Cyclic graph $C_N, N>6$ shows it is not. According to  Proposition \ref{prop_cyclic}, $\gamma({C_N})= 2$, which is less than  $sdim(C_N) = N/2$ \cite{strong_ex}. Therefore, 
\begin{equation} 
 sdim(G) \geq \gamma(G) . 
\end{equation}  

\section{Conclusion}
\label{conclusion}
A construction set of a graph $G$ is a set of landmark nodes such that the set of distance vectors of the nodes  allows the exact construction of $G$. The minimum number of landmarks in a construction set is the link dimension of $G$. Similar to a  resolution set it preserves unique labeling of nodes, but uses additional landmarks as necessary to resolve all the edges. The corresponding set of distance vectors therefore provides an exact representation of $G$ itself, thus  extending the distance vector based approach beyond unique labeling of nodes %, which was the case with resolution sets,
to unique identification of graphs. 
\vspace{-0.3in}
% \\{\bf References}
\begin{singlespace}
\bibliography{mybibfile}
\end{singlespace}	
\end{document}